\definecolor{darkred}{rgb}{0.4,0.15,0.15}
\definecolor{bluegreen}{rgb}{0.20, 0.38, 0.5}
\renewcommand{\thefigure}{\ifnum \c@section>\z@ \thesection.\fi
 \@arabic\c@figure}
\numberwithin{equation}{section}
\newcommand{\pic}[3] {
\begin{figure}[H]
\centering
\includegraphics[scale=#1]{#2}
\caption{#3}
\end{figure}
}
\newcommand{\diff}{\operatorname{d}\!}
\newcommand{\amin}[2]{\underset{#1}{\operatorname{argmin}} \; \left\{ #2 \right\} } 
\newcommand{\prox}[2]{\operatorname{prox}_{#1}{\left(#2\right)}}
\newcommand{\sign}[1]{\operatorname{sign}\left( #1 \right)}
\newcommand{\id}{\operatorname{Id}}
\newcommand{\norm}[1]{\left\| #1 \right\|}
\newtheorem{prop}{Property}
\renewcommand{\algorithmicrequire}{}
\newcommand{\ew}{\odot}
\begin{document} 
\title{A conjugate subgradient algorithm with adaptive preconditioning for LASSO minimization}
\author{Alessandro Mirone, Pierre Paleo}
\affiliation{European Synchrotron Radiation Facility, BP 220, F-38043 Grenoble Cedex, France}
\date{\today}
\begin{abstract}
This paper describes a new efficient conjugate subgradient algorithm which minimizes a convex function containing a least squares fidelity term
and an absolute value regularization term. 
This method is successfully applied to the inversion of ill-conditioned linear problems, in particular for computed tomography with the dictionary learning method. 

A comparison with other state-of-art methods shows a significant reduction of the number of iterations, which makes this algorithm appealing for practical use.
\end{abstract}

\maketitle

\section{Introduction}
Almost every field of science has, at some point, to tackle the linear inverse problem characterized by a matrix $A$.
In this problem, the observations vector $b$ can be expressed as
\begin{equation}\label{pbinverse}
b = A \tilde{x} + \epsilon
\end{equation}
where $\tilde{x}$ is the unknown signal to recover, $A$ the process matrix, and $\epsilon$ is some unknown noise.

The common Bayesian approach is to model the noise $\epsilon$ as an zero-mean Gaussian process of variance $\sigma^2 \id$, and the unknown variable $x$ as another random process.
If the signal $\tilde{x}$ is theoretically given, then the quantity $A \tilde{x}$ is deterministic; thus $b = A \tilde{x} + \epsilon$ is a random process.
More precisely, since $\epsilon \sim \mathcal{N}(0, \sigma^2)$, then $b \sim \mathcal{N}\left( b - A \tilde{x}, \sigma^2\right)$.
The likelihood function of $b$ is then given by
\begin{equation}
p(b \, | \, x) = \dfrac{1}{\sigma\sqrt{2\pi}} \exp\left( -\frac{\norm{b - Ax}_2^2}{2\sigma^2} \right)
\end{equation}
where $\norm{w}_2^2 = w^T w$ is the squared Frobenius norm of $w$, that is, the sum of the squared components.
Now since $x$ is unknown, the Bayesian approach consists in modeling it as another ramdom process.
If the signal is sparse in the representation $D  x$, where the columns of $D$ are the vectors of the basis, 
we can approximate the {\it a-priori} probability of $x$ using a Laplacian distribution, which implements the
sparsity inducing $L_1$ norm \cite{bach_sparsity_inducing} : 
\begin{equation}
p(x) = \dfrac{\beta}{2} \exp\left( -\beta \norm{D  x}_1 \right)
\end{equation}
where $\norm{w}_1$ is the $L_1$ norm of $w$, that is, the sum of the components absolute values.
The posterior probability $p(x \, | \, b)$ conditional on the observation vector $b$ then reads
\begin{equation}
p(x \, | \, b) = \dfrac{p(b \, | \, x) p(x)}{p(b)} \, \sim \,  p(b \, | \, x) p(x) \,\sim\, \exp\left( - \frac{\norm{A x - b}_2^2}{2\sigma^2} - \beta \norm{D x}_1 \right)
\end{equation}
Eventually, the Maximum A Posteriori (MAP) approach amounts to minimizing the log-Likelihood
\begin{equation}
\mathcal{L}(x \, | \, b) = \dfrac{1}{2}\norm{Ax-b}_2^2 + \beta \norm{D x}_1 \label{analysis_framework}
\end{equation}
which is the least squares formulation of \eqref{pbinverse} with a L1 norm regularization.
Notice that this penalty comes from the assumption made on the distribution of the values of $x$. Assuming normally distributed values of $x$ would have led to the Tikhonov regularization \cite{Tikhonov} (L2 norm).
L2-L1 minimization naturally arises in numerous applications when it comes to determine a solution with sparsity constraints.
In signal processing, one can cite deconvolution, image zooming, image inpainting, motion estimation \cite{chambollepock} and even tomographic reconstruction \cite{convex_prototyping}.

Generally speaking, L2-L1 is a special instance of the minimization problem 
\begin{equation}\label{basicpb}
\amin{x}{F(x) = f(x) + g(x)}
\end{equation}
where $F$ is purposely split into a \textit{convex, smooth} part $f$, and a \textit{convex, possibly non-smooth} part $g$.
This formulation is widely used for proximal splitting methods \cite{proxsplitting}, which rely on the computation of the so-called proximal operator

\begin{equation}
\prox{g}{x} =  \left( \id + \partial g\right)^{-1} (x) = \amin{y}{\frac{1}{2}\norm{x-y}_2^2 + g(x)}
\end{equation}
where $\partial g$ is the subdifferential of $g$ : 
\begin{equation}\label{subderivative}
\partial g(x) = \left\{ d \;| \; \forall\, y, \;\;  d^T ( y- x) \leq g(y)-g(x) \right\}.
\end{equation}
The subdifferential is set-valued where $g$ is not differentiable, and single-valued otherwise. For example, we have $\partial \norm{\cdot}_1(x) = \sign{x}$ if $x \neq 0$, and $\partial \norm{\cdot}_1 (0) = [-1, 1]$.

The case of L2-L1 minimization is a special instance of \eqref{basicpb}, where
\begin{equation}
\begin{aligned}
f(x) &= \dfrac{1}{2}\norm{A x - b}_2^2 \\
g(x) &= \beta \norm{D x}_1
\end{aligned}
\end{equation}
An alternative formulation to \eqref{analysis_framework} is the \textit{synthesis formulation}
\begin{equation}\label{synthesis_framework}
\amin{w}{\norm{A H w - b}_2^2 + \beta \norm{w}_1}
\end{equation}
and is celebrated as the least absolute shrinkage and selection operator (LASSO) \cite{lasso},
while \eqref{analysis_framework} implements, at variance with \eqref{synthesis_framework}, an \textit{analysis approach}.

The formulation \eqref{analysis_framework} corresponds to a linear inverse problem where $D x$ is constrained to be sparse. 
An example is the Total Variation regularization \cite{ROF}: $D = \norm{\nabla x}_1$.
In the formulation \eqref{synthesis_framework}, the solution $x = H w$ is synthesized from the coefficients $w$; these coefficients are constrained to be sparse in some domain.
An example is the Wavelet denoising for $A = \id$. 
These two approaches are equivalent if $D$ is an orthonormal transform (and then $H = D^*$ the hermitian conjugate of $D$) \cite{analysis_synthesis}.
However, in most cases, the theory and algorithms are more difficult in the analysis formulation.
In proximal splitting methods, the computation of $\operatorname{prox}_g$ is straightforward in the formulation \eqref{synthesis_framework} ($g = \norm{\cdot}_1$), but not trivial in the formulation \eqref{analysis_framework} ($g = \norm{D \cdot }_1$).

An alternative to proximal splitting methods is to adapt the functional $F$ in \eqref{basicpb}
in order to use fast optimization algorithms like Newton or conjugate gradient. 
It usually boils down to smoothing the regularization term $g(x)$.
However, such approaches converge to an approximate solution of \eqref{basicpb}, which can be an issue if high sparsity constraint should be met. 

We present in this work an algorithm, based on a new conjugate sub-gradient method optimized for LASSO minimization.
In the next section, after a brief recall of the conjugate gradient algorithm, we derive our algorithm.
Section \ref{applications} illustrates the applications with numerical examples :  one for a very ill-conditioned matrix,
and another for tomographic reconstruction with the dictionary-learning regularization. 
The convergence of this conjugate subgradient algorithm is compared to to the more general Nesterov \cite{Nesterov:1983wy} method.



\section{A conjugate subgradient algorithm}
\subsection{The nonlinear conjugate gradient algorithm}
In this section, we settle the notations by recalling the standard conjugate gradient algorithm.

Let $x$ denote the (vector) variable of the function $F$. 
For the remainder of this paper, the functional to minimize is $F(x) = f(x) + g(x)$ with $f(x) = \frac{1}{2} \norm{A x - b}_2^2$ and $g(x) = \beta \norm{x}_1$, so the optimization problem is
\begin{equation}\label{L1L2}
\amin{x}{F(x) = \frac{1}{2}\norm{A x - b}_2^2 + \beta \norm{x}_1}
\end{equation}

The conjugate gradient algorithm builds a set of conjugate directions $\left(p_k \right)_{k = 1\ldots n}$ where $n$ is the number of iterations.
Once the conjugate direction $p_k$ at iteration $k$, the variable is updated with $x_{k+1} = x_k + \alpha_k p_k$. 
The scalar $\alpha_k$ is the step size at iteration $k$, computed with a line search. 
The gradient of $F$ is then evaluated in $x_{k+1}$ to compute the next conjugate direction $p_{k+1}$.
The computation of $p_{k+1}$ actually only depends on the previous direction, which makes the conjugate gradient algorithm practically usable.

For a differentiable function $F$, the standard conjugate gradient is given by Algorithm \ref{alg:cg}.

\begin{algorithm}[H]\setstretch{1.35}
\caption{Conjugate gradient}\label{alg:cg}
\algorithmicrequire
$F$ : differentiable function\\
$n$ : number of iterations
\begin{algorithmic}[1]
\Procedure {conjGrad}{$F$, $n$}
\State Compute an initial guess $x_0$ 
\State $g_0 = -\nabla F(x_0)$ \Comment{Steepest direction at iteration $0$}
\State $p_0 = g_0$
\For {$k \leftarrow 0, n$}
	\State $\alpha_k = \amin{\alpha}{F(x_k + \alpha p_k)}$ \Comment{Line search}
	\State $x_{k+1} = x_k + \alpha_k p_k$ \Comment{Update variable} 
	\State $g_{k+1} = -\nabla F(x_{k+1})$ \Comment{Update Steepest direction}
	\State $\beta_k = \dfrac{g_{k+1}^T (g_{k+1} - g_k)}{g_k^T g_k}$ \Comment{Update $\beta$, for example with the Polak-Ribiere rule}
	\State $p_{k+1} = g_{k+1} + \beta_k p_k$ \Comment{New conjugate direction}	
\EndFor
\State \Return $x_n$
\EndProcedure
\Statex
\end{algorithmic}
\end{algorithm}

\subsection{From conjugate gradient to conjugate subgradient}

In the basic subgradient method
\begin{equation}\label{sg}
x_{k+1} = x_k - \gamma_k p_k \qquad p_k \in \partial F(x_k)
\end{equation}
the direction $p_k$ is any subgradient $\partial F(x_k)$, 
which is a drawback of this method since there is no indication of which subgradient should be chosen.
As a result, the conjugate subgradient is not a descent method: the objective function can increase during the optimization process \nolinebreak\cite{subgradmethods}.

To build an algorithm based on the conjugate gradient, one has to define an unique descent direction at each iteration, which means choosing between all the possible subgradients $\partial F$ when $F$ is not differentiable.

The basic idea is to rely on the quadratic part $\nabla f$ of the gradient. Once the gradient of the smooth part $\nabla f (x)$ is calculated, the subgradient of the L1 part $g$ is evaluated with :
\begin{equation}\label{completegrad}
\partial g (x) =
\begin{cases}
\sign{x} & \text{if } x \neq 0 \\
\sign{\nabla f (x)} & \text{if } x = 0
\end{cases}
\end{equation}
Notice that using \eqref{completegrad}, the subderivative of $F = f + g$ is always single-valued.
The motivation of such a choice is that when the variable $x$ comes near the singularity of $g = \norm{\cdot}_1$, 
every direction (subgradient) is possible.
The idea is then to go in the same direction than the quadratic term is ``pushing" to.


The use of \eqref{completegrad} to compute the subgradient enables to solve the indecision of which subgradient should be chosen, and makes possible the construction of a conjugate directions basis. 
The standard Polak-Ribiere method can be used to update the conjugate direction from the previous directions.

A crucial point for the convergence rate is the use of a preconditioner.
In our method, the preconditioner relies on the magnitude of the quadratic part of the gradient $\nabla f$. 

From the variables $x_{k+1}, \, p_k, \, q_k$ (see Algorithm \ref{alg:cg}), three new preconditioned variables $\overline{x}_{k+1}, \, \overline{p}_{k+1}, \, \overline{q}_{k+1}$ are built with the following preconditioner :
\begin{equation}\label{update_prec}
\left\lbrace
\begin{aligned}
D &= \begin{cases}
1 & \text{if } \left| \nabla f (M_k  \ew x_{k+1}) \right| < \beta  \text{ and } {x_k} \cdot x_{k+1} < 0 \\
0 & \text{otherwise}
\end{cases} \\
M_{k+1} &= \min\left( M_k \cdot \left( 1 - \gamma D + \delta (1 - D) \right) , \, 1 \right) \\
S_{k+1} &= \begin{cases}
0 & \text{if } \left| \nabla f (M_k  \ew x_{k+1}) \right| < \beta  \text{ and } |x| < \varepsilon \\
1 & \text{otherwise}
\end{cases}\\
V_{k+1} &= \dfrac{M_{k+1}}{M_k}
\end{aligned}
\right.
\end{equation}
\begin{equation}\label{update_x_bar}
\left\lbrace
\begin{aligned}
\overline{x}_{k+1} &= \frac{x_{k+1}}{V_{k+1}} \cdot S_{k+1} \\
\overline{p}_{k+1} &= p_{k} \cdot V^\alpha_{k+1} \cdot S_{k+1} \\
\overline{q}_{k+1} &= q_{k} \cdot V_{k+1} \cdot S_{k+1}
\end{aligned}
\right.
\end{equation}
all the operation being componentwise except for the argument 
of $f$ which is obtained with the componentwise multiplication $\ew$ between the vector $x_{k+1}$
and the vector of preconditioning multiplying factors.

The rationale of this preconditioner can be summarized as follow : 
\begin{itemize}
\item When the gradient magnitude of the quadratic part $\nabla f$ is important, the components of the variables are updated as in the conjugate gradient method -- without variable substitution -- since the quadratic part is predominant over the non-smooth part.
\item When $\left| \nabla f \right|$ is small, the standard conjugate gradient method
 would be disturbed by frequent crossings of regions where the gradient of $g$ is discontinuous. 
The rule used is that the preconditioning factors  are increasingly shrunk by a factor $\gamma < 1$ as long as they should be updated. 
The criterion is to check if the previous preconditioned variable ($\overline{x}_k$) and the variable updated after the line search ($x_{k+1}$) have an opposite sign. This variable substitution is implemented by the coefficient vector $M_k$.
\item The exponent $a$, used in the determination of the vector $\overline{p}_{k+1}$
is a tunable number. The vector $\overline{p}_{k+1}$ is used in the composition of the $p_{k+1}$ descent direction(see Algorithm \ref{alg:cg}). By using a  number $a>-1$ we tend to avoid constructing descent directions which bring us too fast to non-smooth regions. Keeping $a=-1$ corresponds to 
using the previous descend direction as in standard conjugate gradient method.

\item Another rule is that during this phase (small quadratic gradient), the components which are ``small enough" (below a threshold $\epsilon$) are set -- and will remain as long as the force on them is weak-- to zero.
This rule is especially important for the convergence toward solutions with high sparsity.
This rule is implemented by the matrix $S_k$.
\end{itemize}


The conjugate subgradient algorithm for LASSO optimization is given by Algorithm \ref{alg:csg}

\begin{algorithm}[H]\setstretch{1.35}
\caption{Conjugate subgradient}\label{alg:csg}
\algorithmicrequire
$F$ : function to optimize, $F(x) = f(x) + g(x)$ with $f$ the quadratic part and $g$ the L1 part \\
$\gamma, \delta, \epsilon$ : parameters for update the preconditioner (see \eqref{update_prec}) \\
$n$ : number of iterations\\
\begin{algorithmic}[1]
\Procedure {conjSubGrad}{$F$, ($\gamma$, $\delta$, $\epsilon$), $n$}
\State Compute an initial guess $\overline{x}_0$ 
\State $g_0 = -\nabla F(x_0)$ \Comment{Steepest direction at iteration $0$}
\State $p_0 = g_0$
\State $M_0 = 1$ \Comment{Element-wise}
\For {$k \leftarrow 0, n$}
	\State $q_k = M_{k} \ew A^T A (M_{k} \ew p_k) $
	\State Compute $\alpha_k = \amin{\alpha}{F(M_{k} \ew (\overline{x}_k + \alpha p_k))}$
	\State $x_{k+1} = \overline{x}_k + \alpha_k p_k$
	\State Update preconditioners $(M_{k+1}, \, S_{k+1}, \, V_{k+1})$ using \eqref{update_prec}
	\State Update $(\overline{x}_{k+1}, \, \overline{p}_{k+1}, \, \overline{q}_{k+1})$ using \eqref{update_x_bar}
	\State $g_{k+1} = -\nabla F (\overline{x}_{k+1} \ew M_{k+1}) \ew S_{k+1}\ew M_{k+1}$
	\State $\beta = -\dfrac{\overline{q}_{k+1}^T g_{k+1}}{\overline{q}_{k+1}^T \overline{p}_{k+1}}$
	\State $p_{k+1} = g_{k+1} + \beta \overline{p}_{k+1}$

\EndFor
\State \Return $x_n$
\EndProcedure
\Statex
\end{algorithmic}
\end{algorithm}

\subsection{Line search}
The line search is a crucial step of gradient methods. The variables are updated with the previously computed conjugate direction $p_k$.
The step $\alpha_k$ in this direction should be such as
\begin{equation}\label{alpha1}
\alpha_k = \amin{\alpha}{F(M_{k+1}\ew x_{k+1})} \qquad \text{with } \, x_{k+1} = x_k + \alpha p_k
\end{equation}
The computation of \eqref{alpha1} can be done ``blindly" with a generic line search, but here one can benefit from both the quadratic nature of $f$ and the convex property of $g$. We discuss how to do it in this session, discarding for conciseness, and without loss 
of generality,  the notation of preconditioner vector $M$.

Regarding the quadratic part $f$, it is easily shown that
\begin{equation}
f(x_k + \alpha p_k) = \frac{1}{2}\norm{A (x_k + \alpha p_k) - b}_2^2  = a_2 \alpha^2 + a_1 \alpha + a_0
\end{equation}
\begin{equation*}
\text{with } a_2 = \frac{1}{2} p_k^T A^T A p_k , \;\,
a_1 = p_k^T A^T \left( A x_k - b \right), \;\,
a_0 = \frac{1}{2}\left( x_k^T A^T A x_k + b^T b \right)
\end{equation*}
The coefficients $a_2$ and $a_1$ can be computed once for all before the line search ; actually, they are also used elsewhere in the algorithm so they have to be computed anyway.
The evaluation of $\frac{\diff f}{\diff \alpha}$, the derivative of $f$ with respect to the scalar $\alpha$, only requires these two coefficients, and thus has virtually no cost.

Another interesting property of smooth quadratic function $f(x) = \norm{A x - b}_2^2$ is
\begin{equation}\label{update_grad_smooth}
\nabla f(x_{k+1}) = \nabla f (x_k) + \alpha_k A^T A p_k
\end{equation}
The quantity $A^T A p_k$ is also reused, for example with the computation of $p_k^T A^T A p_k$. Hence the update of the gradient $\nabla f(x_{k+1})$ from the previous gradient $\nabla f(x_k)$ is cheap.

For a smooth quadratic function, the line search is straightforward:
\begin{align*}
0 = \dfrac{\diff f}{\diff \alpha} &= \nabla f (x_{k+1})^T \cdot \dfrac{\diff}{\diff \alpha} x_{k+1} \\
	&= p_k^T \left( \nabla f (x_k) + \alpha A^T A p_k \right) \qquad \text{using \eqref{update_grad_smooth}}\\	
\end{align*}
which gives
\begin{equation}\label{alpha_smooth}
\alpha_k = \dfrac{-p_k^T \nabla f(x_k)}{p_k^T A^T A p_k}
\end{equation}

Now, getting back to the whole function $F = f + g$, a one-step line search like \eqref{alpha_smooth} is not possible since one cannot extract $\alpha$ from $\partial g (x_{k+1})$.
However, due to the convexity of $g$, an upper bound of $\alpha_k$ can be computed using the following property :
\begin{prop}\label{prop:pg}
For all $k$, we have $p_k^T \partial g(x_{k+1}) \geq p_k^T \partial g(x_k)$.
\end{prop}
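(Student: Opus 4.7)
The plan is to derive the inequality directly from the monotonicity of the subdifferential of the convex function $g$, combined with the non-negativity of the line-search step $\alpha_k$.

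First, I would recall the classical monotonicity property of subdifferentials of convex functions, which follows immediately from the defining inequality \eqref{subderivative}. Picking any $d \in \partial g(x)$ and $d' \in \partial g(y)$, the two subgradient inequalities
\begin{equation*}
d^T(y-x) \leq g(y)-g(x), \qquad (d')^T(x-y) \leq g(x)-g(y)
\end{equation*}
can be added to yield $(d'-d)^T(y-x) \geq 0$. In the statement of the proposition, the inequality between sets must be read in the natural way: for every $d_k \in \partial g(x_k)$ and every $d_{k+1} \in \partial g(x_{k+1})$ one has $p_k^T d_{k+1} \geq p_k^T d_k$.

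Next I would apply monotonicity at the two consecutive iterates $x_k$ and $x_{k+1} = x_k + \alpha_k p_k$. Setting $x = x_k$, $y = x_{k+1}$, $d = d_k$, $d' = d_{k+1}$ gives
\begin{equation*}
(d_{k+1}-d_k)^T(x_{k+1}-x_k) = \alpha_k\,(d_{k+1}-d_k)^T p_k \geq 0.
\end{equation*}
Since the line search of Algorithm \ref{alg:csg} produces a non-negative step $\alpha_k \geq 0$ (the direction $p_k$ is chosen to be a descent direction for $F$, so the minimizer in $\alpha$ is attained at $\alpha \geq 0$), dividing by $\alpha_k$ when strictly positive yields $p_k^T d_{k+1} \geq p_k^T d_k$, and the inequality is trivial when $\alpha_k = 0$.

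The argument is essentially a one-line consequence of subdifferential monotonicity, so there is no genuine obstacle in the proof. The only delicate point worth flagging explicitly is the set-valued reading of the inequality and the role of $\alpha_k \geq 0$; once both are clarified, the result follows without any reference to the specific form $g(x) = \beta\|x\|_1$ or to the choice \eqref{completegrad} used in the algorithm.
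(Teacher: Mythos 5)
Your proof is correct, but it takes a genuinely different route from the paper's. The paper argues componentwise: it uses the fact that for the separable function $g = \beta\norm{\cdot}_1$ each component $\partial g^i$ depends only on $x^i$ and is nondecreasing in it, then splits into the cases $p_k^i \geq 0$ and $p_k^i \leq 0$ to get $p_k^i\,\partial g(x_{k+1})^i \geq p_k^i\,\partial g(x_k)^i$ for every $i$, and sums. You instead invoke the global monotonicity of the subdifferential of an arbitrary convex function, $(d'-d)^T(y-x)\geq 0$, and substitute $y-x=\alpha_k p_k$ with $\alpha_k\geq 0$. Your argument is cleaner and strictly more general: it does not rely on separability of $g$, so it would cover the analysis-type penalty $\norm{Dx}_1$ as well, whereas the paper's componentwise reasoning is tied to the synthesis case actually used in the algorithm. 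What the paper's version buys is only concreteness for the $\ell_1$ norm. One small point to tighten: under your ``for every $d_k$, for every $d_{k+1}$'' set-valued reading, the case $\alpha_k=0$ is not literally trivial, since then $x_{k+1}=x_k$ and two different elements of the same subdifferential set need not satisfy $p_k^T d_{k+1}\geq p_k^T d_k$; it becomes trivial once you fix the single-valued selection \eqref{completegrad} used by the algorithm (the paper has the same implicit gap), so you should state that the inequality is meant for that selection, or take $d_{k+1}=d_k$ when the iterate does not move.
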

\begin{proof}
Since $g$ is convex, every component $\partial g^i$ of its subgradient is increasing. Thus, we have $\partial g(x_{k+1})^i \geq \partial g(x_k)^i$ if and only if $x_{k+1}^i \geq x_k^i$, i.e $p_k^i \geq 0$ (since $\alpha_k \geq 0$). Thus :
\begin{itemize}
\item If $p_k^i \geq 0$, then $x_{k+1}^i = x_k^i + \alpha_k p_k^i \geq x_k^i$, so $\partial g(x_{k+1})^i \geq \partial g (x_k)^i$, 
so $p_k^i \cdot \partial g(x_{k+1}) \geq p_k^i \cdot \partial g (x_k)$.
\item Similarly, if $p_k^i \leq 0$, then $\partial g (x_{k+1})^i \leq \partial g(x_k)^i$ so $p_k^i \cdot \partial g (x_{k+1})^i \geq p_k^i \cdot \partial g(x_k)^i$.
\end{itemize}
Doing the scalar product, we have in any case $p_k^T \partial g(x_{k+1}) \geq p_k^T \partial g(x_k)$
\end{proof}
Using this property, we can derive the same calculation as for \eqref{alpha_smooth} :
\begin{equation}\label{alpha_up_calc}
\begin{aligned}
0 = \dfrac{\diff F}{\diff \alpha} &= \dfrac{\diff f}{\diff \alpha} + \dfrac{\diff g}{\diff \alpha} \\
	&= p_k^T \nabla f (x_k) + \alpha p_k^T A^T A p_k + p_k^T \partial g (x_{k+1}) \\
	& \geq p_k^T \left( \nabla f (x_k) + \partial g (x_k) \right) + \alpha p_k^T A^T A p_k
\end{aligned}
\end{equation}
Thus
\begin{equation}\label{alpha_up}
\alpha_k \leq \alpha_k^u = \dfrac{-p_k^T \partial F (x_k)}{p_k^T A^T A p_k}
\end{equation}
For the last inequality in \eqref{alpha_up_calc}, property \ref{prop:pg} has been applied.
The upper bound $\alpha_k^u$ is convenient for a line search using the bisection method.
For example, the line search can be done using the \textit{regula falsi} method at the beginning when the differentiable L2 part is predominant, and then the bisection method when the L1 part becomes more important.

\section{Applications}\label{applications}
In this section, numerical examples are provided to compare the convergence of this new method with Nesterov algorithm \cite{Nesterov:1983wy}, also known as FISTA \cite{beckteboulleIEEE}
which is a state-of-art convex non-smooth optimization method.

\subsection{Example on ill-conditioned matrix}
This example illustrates the convergence rate of the conjugate subgradient algorithm for problem \eqref{L1L2}, where the matrix $A$ is chosen to be ill-conditioned.
The code to compute this example can be found at \cite{matrice_folle} 
In this example, $A$ is a $1000 \times 1000$ symmetric matrix, with a condition number $\kappa = \frac{\sigma_{\text{max}}}{\sigma_{\text{min}}} = \frac{\lambda_{\text{max}}}{\lambda_{\text{min}}} = \frac{95.5}{1.61\cdot 10^{-14}} \simeq 5.93 \cdot 10^{15}$. 
The eigenvalues of $A$ are plotted on Figure \ref{Fig:eigens}.

\pic{0.5}{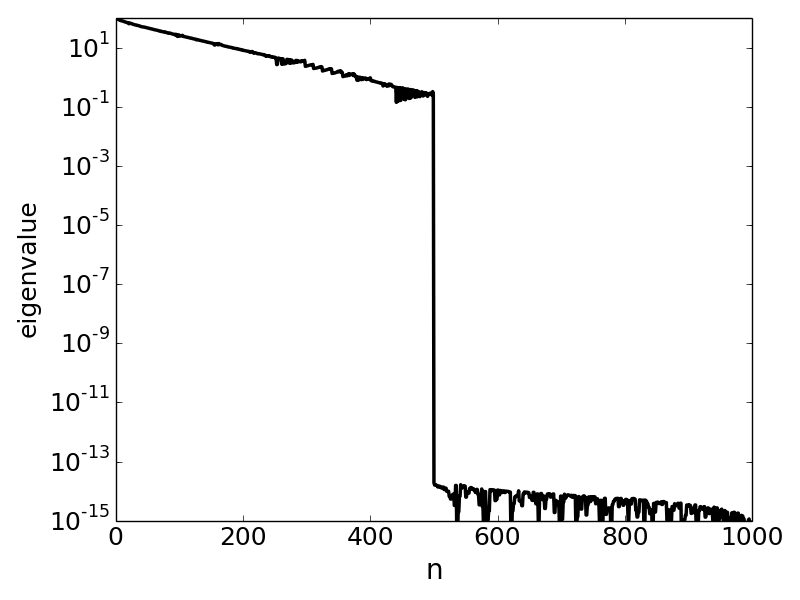}{Logarithmic plot of the eigenvalues of the matrix $A$\label{Fig:eigens}}
The algorithm was run with the parameters $\gamma = 0.85$ and $\delta = 0.04$, the regularization parameter was $\beta = 0.1$ and the exponent for direction $\overline p$ was $a=1$.
Figure \ref{Fig:fistavscsg} shows the objective function values $F(x)-F(x_\infty)$ for $2000$ iterations
for the two methods.
It can be seen that CSG achieves the solution in about 800 iterations, while FISTA needs much more iterations to converge.
Also, the objective function values are always smaller for CSG.

\pic{0.5}{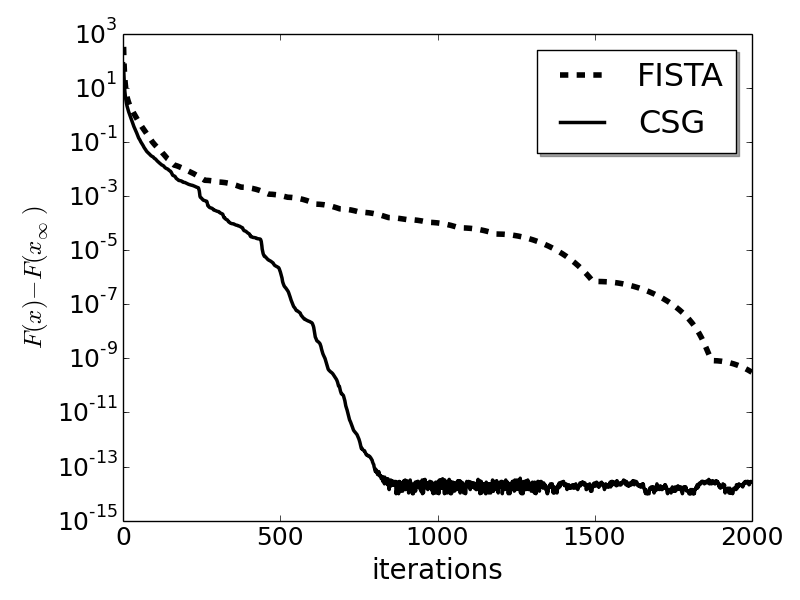}{Logarithmic plot of objective function values for CSG and Nesterov algorithm\label{Fig:fistavscsg}}

\subsection{Tomographic reconstruction with the dictionary-learning regularization and ring-artifacts correction}
Tomographic reconstruction is another example of linear inverse problem.
In the last years, an increasing interest was shown for iterative techniques with regularization, 
which can be seen as an extension of the standard Algebraic Reconstruction Technique and Simultaneous Iterative Reconstruction Technique.
These techniques bring many opportunities, for example
modeling more accurately the process, incorporating a priori knowledge on the volume and correcting artifacts.
A prominent application is the low-dose tomography reconstruction.

Iterative tomographic reconstruction amounts to an optimization problem
An example is the the \textit{total variation} reconstruction 
\begin{equation}\label{tomotv}
\amin{x}{\norm{P x - d}_2^2 + \beta \norm{\nabla x}_1}
\end{equation}
which penalizes the nonzero components of the gradient of the slice, promoting piecewise constant results.
Here $x$ denotes the slice (or volume) to be reconstructed, $P$ is the projection operator, $d$ is the acquired sinogram and $\beta$ is a factor weighting the sparsity of the gradient of the solution.
Another example is the \textit{dictionary learning} reconstruction
\begin{equation}\label{tomodl}
\amin{w}{\norm{P D w - d}_2^2 + \beta \norm{w}_1}
\end{equation}
which promotes the sparsity of the slice in an appropriate basis $D$ : either a learned dictionary \cite{DLpyhst2} or a Wavelet transform.

Notice that \eqref{tomotv} correspond to an analysis formulation while \eqref{tomodl} is a synthesis formulation, for which the conjugate subgradient can be applied.

In this example, the standard $512 \times 512$ test image \textit{Lena} was used.
According to the Nyquist criterion, $\frac{\pi}{2} 512 \simeq 800$ projections would be required to get an appropriate reconstruction quality with the Filtered Back Projection.
With iterative techniques promoting sparsity, this number can be dramatically decreased according to the Compressive Sensing theory \cite{candes_cs}.
Here only $80$ projections were used to demonstrate the abilities of the Dictionary Learning technique.
Additionally, rings artifacts were simulated by adding lines in the sinogram.
The lines values are not constant along the projection angle, which makes the problem more challenging.
To take the rings correction into account  \cite{ringsDLTV} , the reconstruction problem is written as \eqref{tomodl_rings}.

\begin{equation}\label{tomodl_rings}
\amin{w}{F(w) = \norm{P D w + 1 \times r^T - d}_2^2 + \beta \norm{w}_1 + \beta_r \norm{r}_1} 
\end{equation}

In this formalism, a ring vector $r$ is added to each projection line of the sinogram -- the rings artifacts are modeled as constant values along the projection angle in the sinogram.
The sinogram has dimensionality $(N_p, N)$ where $N_p$ is the number of projections and $N$ is the number of pixels in one dimension of the slice.
The operation $1 \times r^T$ consists in multiplying a $(N_p, 1)$ vector of ones with a $(1, N)$ vector $r$.

The functional \eqref{tomodl_rings}   was minimized with two techniques implemented in the PyHST2 code \cite{pyhst2} : Nesterov algorithm (FISTA) and this conjugate subgradient algorithm (CSG).
In this test, an over-complete dictionary has been used, resulting in an ill-conditioned problem which is a difficult
test case for optimization algorithms.
Moreover we observed that, for this kind of problem, the transfer of energy
from the reconstructed image to the auxiliary variables capturing the spurious artifacts ($r$) occurs in the 
final part of the convergence and is slow with the FISTA. The best convergency properties were obtained with $a=0$.

Figure \ref{Fig:vs_rings} shows the plot of the normalized objective function $F(w)-F(w_\infty)$ for $8000$ iterations.
Both methods converge to the same final value since the same functional $F(w)$ is minimized, but the last stage of the optimization process
is much faster for the conjugate subgradient algorithm.
Figure \ref{Fig:tomoresults} shows the reconstructed images with Filtered Back Projection and the Dictionary Learning technique, for parameters $\beta = 0.7$ and $\beta_r = 10$.
It can be noted that the rings artifacts are almost entirely removed, even with the simple ``constant rings" modeling.

{
\begin{figure}
\includegraphics[scale=0.4]{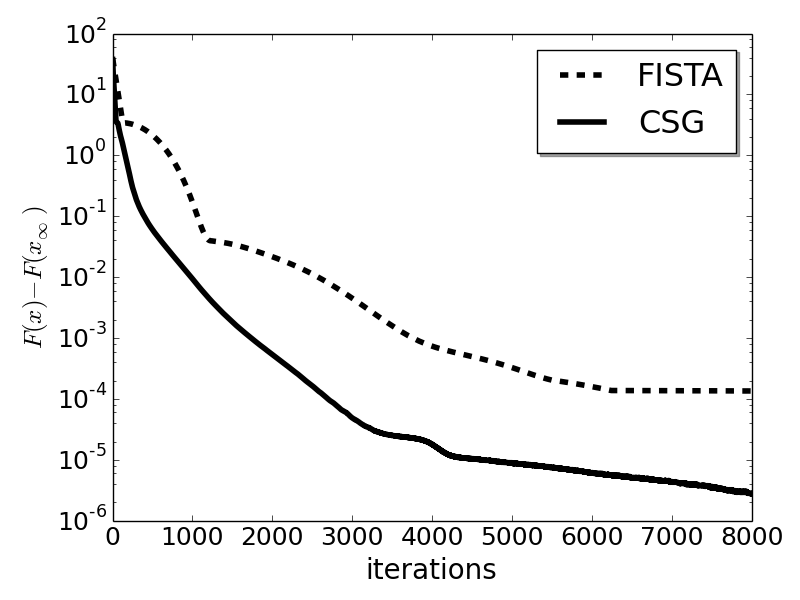}
\includegraphics[scale=0.4]{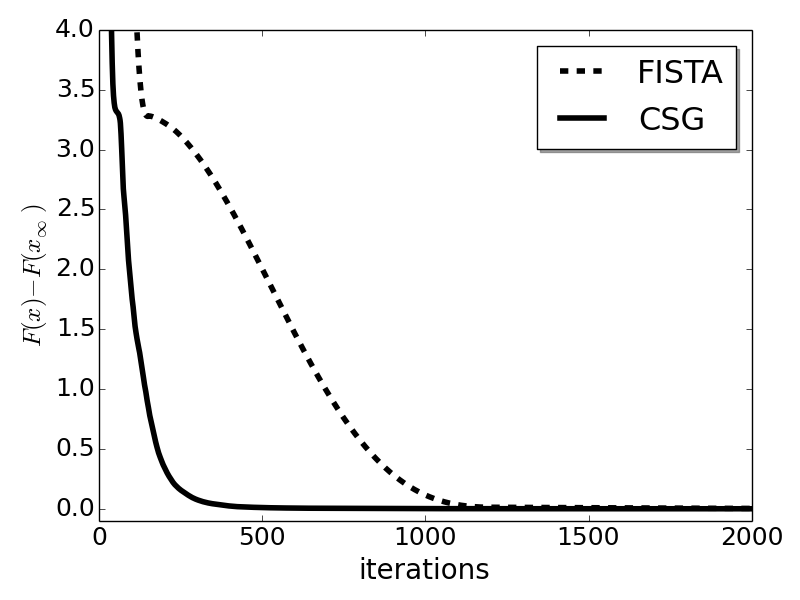}
\caption{Logarithmic and linear plots of the values of the objective function for both methods \label{Fig:vs_rings}}
\end{figure}
}

\begin{figure}[H]
     \begin{center}
        \subfigure[Phantom of Lena]{%
            \label{fig:first}
            \includegraphics[width=0.4\textwidth]{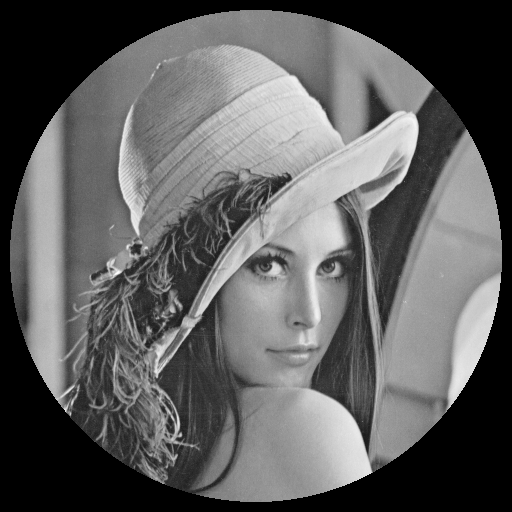}
        }%
        \subfigure[Filtered Back Projection]{%
           \label{fig:second}
           \includegraphics[width=0.4\textwidth]{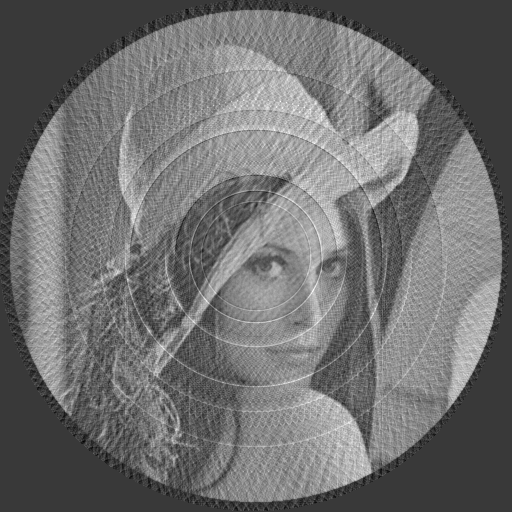}
        }
            \subfigure[Dictionary Learning]{%
           \label{fig:second}
           \includegraphics[width=0.4\textwidth]{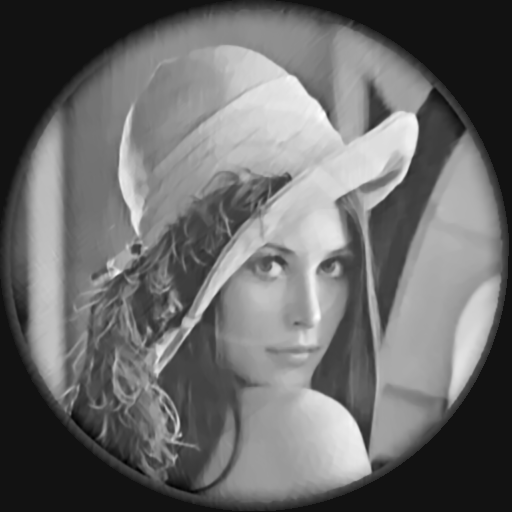}
        }
    \end{center}
    \caption{%
        Phantom of Lena reconstructed with $80$ projection angles.
        Lines were added to the sinogram to simulate ring artifacts.
     }%
   \label{Fig:tomoresults}
\end{figure}

\section{Conclusions}
We have presented a specialized Conjugate Sub Gradient method which we have tailored
for the LASSO minimization.
This method is fit to cope at the same time with the ill-conditioning of the LASSO matrix
and the discontinuities in the first derivative.
We have tested our method on two difficult cases and found excellent acceleration, outperforming state-of-the art algorithms.
An implementation of CSG can be found at \cite{matrice_folle}.

\section*{Acknowledgement}
We thank Jerome Lesaint which, during his stage from UJF, partecipated to the initial phase of the investigations,
studying the convergence properties of Conjugate Gradient on smoothed LASSO problems.

\section*{References}
\bibliographystyle{ieeetr}
\bibliography{biblio}

\end{document}